\documentclass[11pt]{article}

\usepackage[margin=1in]{geometry}
\usepackage{amsmath,amssymb}
\usepackage{amsthm}
\usepackage{xcolor}
\usepackage{enumitem}
\usepackage{authblk}
\usepackage{natbib}

\title{\textbf{Anchor-proofness in Voting with Ranked Ballots}}
\author[1]{Federico Fioravanti}
\author[2]{Zoi Terzopoulou}
\affil[1]{Universit\'e Jean Monnet Saint-\'Etienne, CNRS, Universit\'e Lyon 2, emlyon business school, GATE, 42023, Saint-\'Etienne, France.}
\affil[2]{CNRS, Universit\'e Jean Monnet Saint-\'Etienne, Universit\'e Lyon 2, emlyon business school, GATE, 42023, Saint-\'Etienne, France.}
\date{}

\theoremstyle{definition}
\newtheorem{teo}{Theorem}
\newtheorem{myrem}{Remark}
\newtheorem{defn}{Definition}
\newenvironment{remark}{\begin{myrem}\rm}{\hfill$\vartriangle$\end{myrem}}
\newcommand{\pref}{p}
\newcommand{\prof}[1]{\boldsymbol{#1}}
\newcommand{\alt}{X}
\newcommand{\rank}{\mathcal{R}(\alt)}

\begin{document}

\maketitle

\begin{abstract}
\noindent When voters evaluate alternatives sequentially, an anchoring bias makes their ballots depend on the order in which alternatives are presented. For approval ballots, no non-constant voting rule is robust to this dependence. We show that ranked ballots change the picture sharply: a rank-based rule is robust to the presentation order if and only if it is tops-only. Robustness to anchoring is thus available, but only at the price of ignoring everything a voter reports beyond her favourite alternative.
\end{abstract}

\noindent\textbf{Keywords:} anchoring; presentation order; ranked ballots; tops-only; plurality\\
\noindent\textbf{JEL:} D71, D72, D91

\section{Introduction}

Individual evaluations may be shaped by the order in which options are encountered. In particular, alternatives seen early may serve as anchors against which later ones are judged \citep{tversky1974judgment}; the resulting contrast effects are well documented \citet{radbruchschiprowski2025}, exploiting quasi-random assignment of candidates to evaluators, and finding that an individual's assessment falls with the quality of the other candidates the same evaluator has seen.

This matters for collective decisions whenever ballots are formed sequentially, as for example on participatory-budgeting platforms and in most online voting interfaces. There, whoever designs the interface chooses the presentation order, and that choice may move the outcome even though voters' underlying preferences are fixed. A companion paper studies this problem for approval ballots and reaches a negative verdict: a rule whose winners are invariant to the presentation order must be constant \citep{anchorapproval}. Anchoring, in that setting, cannot be designed away.

We ask whether the ballot format is responsible for that impossibility. Suppose voters submit top-truncated rankings instead of approval sets, reporting not merely which alternatives they endorse but how they rank them. The selection of alternatives is unchanged---a voter still retains exactly those alternatives that are acceptable and better than every alternative seen before---so the two formats differ only in whether the ordering among retained alternatives is reported.

That difference turns out to be decisive, for a simple reason: a ranked ballot always displays the voter's true favourite at the top, whatever the presentation order, whereas an approval set conceals which of its members is the favourite. Ranked ballots therefore carry an order-invariant statistic, and approval ballots carry none. Our result (Theorem~\ref{thm:tops-only}) shows that this statistic is the only thing a robust rule can use: a rank-based voting rule is anchor-proof if and only if it is tops-only. Plurality is anchor-proof, but any rule that consults a part of the ballot below the top is not.

The result is thus a possibility and an impossibility at once. It identifies a non-degenerate class of rules immune to the presentation order, which the approval setting does not admit; but it also says that immunity requires discarding the very information that ranked ballots were introduced to collect. Informational austerity, for which plurality is routinely criticised \citep{Laslier2011Plurality}, is here exactly what secures robustness. The trade-off is sharp: a rule is manipulable through the presentation order precisely to the extent that it listens to the voters.

\section{Model}\label{sec:model}

Let $N=\{1,\ldots,n\}$ be a finite set of voters and $\alt$ a finite set of alternatives with $|\alt|=m>2$. For a set $Y$, $L(Y)$ denotes the set of linear orders over $Y$. Each voter $i\in N$ has an intrinsic \emph{preference} $\pref_i=(\pref_i^1,\ldots,\pref_i^m)\in L(\alt)$, where $\pref_i^k$ is the alternative in position $k$, together with an \emph{acceptability threshold} $t_i\in\{1,\ldots,m\}$; the set of alternatives acceptable to $i$ is $\mathit{ACC}(\pref_i)=\{\pref_i^1,\ldots,\pref_i^{t_i}\}$, which always contains $\pref_i^1$. We write $\prof{\pref}=(\pref_1,\ldots,\pref_n)$ for a profile of preferences.

Let $S(\alt)$ be the set of permutations of $\alt$. A voter $i$ is shown the alternatives one at a time, in the order given by $\sigma_i\in S(\alt)$, where $\sigma_i^k$ is the alternative presented at step $k$; $\prof{\sigma}=(\sigma_1,\ldots,\sigma_n)\in S(\alt)^N$ is an order profile. Voter $i$ \emph{retains} an alternative if it is acceptable and preferred to every alternative encountered before it. Formally, setting $R^0_{\pref_i,\sigma_i}=\emptyset$, for $k=1,\ldots,m$:
\[
R^{k}_{\pref_i,\sigma_i}=
\begin{cases}
R^{k-1}_{\pref_i,\sigma_i}\cup\{\sigma_i^k\} & \text{if } \sigma_i^k\in \mathit{ACC}(\pref_i) \text{ and } (\sigma_i^k,\sigma_i^j)\in \pref_i \text{ for all } j<k,\\[2pt]
R^{k-1}_{\pref_i,\sigma_i} & \text{otherwise.}
\end{cases}
\]
At every step, the anchor is the best alternative encountered so far, and the process is committed: a retained alternative is never withdrawn.

A \emph{top-truncated ranking} is a linear order over a non-empty subset of $\alt$, where there are unranked alternatives. 
In contrast with the literature on incomplete preferences \citep{Baumeister2012Campaigns,TerzopoulouEndrissJME2021}, the ranked alternatives are not necessarily more preferred than the unranked ones, as an alternative is left unranked depending on the anchor at the time. 
$\rank$ denotes the set of all top-truncated rankings.
The \emph{ballot} of voter $i$ with preference $\pref_i$ under the order $\sigma_i$, denoted $\pref_{i,\sigma_i}\in\rank$, is the restriction of $\pref_i$ to $R^m_{\pref_i,\sigma_i}$. We write $\prof{\pref_{\sigma}}=(\pref_{1,\sigma_1},\ldots,\pref_{n,\sigma_n})$, and $b^1$ for the top alternative of a ranking $b$. Since $R^m_{\pref_i,\sigma_i}$ is exactly the approval ballot of the companion model, the two formats retain the same alternatives and differ only in whether the ranking among them is reported.

Two presentation orders will be used repeatedly: $\overline{\sigma_i}=(\pref_i^1,\ldots,\pref_i^m)$ shows the alternatives from most to least preferred, and $\underline{\sigma_i}=(\pref_i^m,\ldots,\pref_i^1)$ from least to most preferred.

A \emph{rank-based voting rule} is a map $F_r:\rank^N\rightarrow 2^{\alt}$ from a profile of ballots to a set of winners.

\begin{defn}\label{def:anchor}
$F_r$ is \textbf{anchor-proof} if $F_r(\prof{\pref_{\sigma}})=F_r(\prof{\pref_{\pi}})$ for every preference profile $\prof{\pref}$ and all order profiles $\prof{\sigma},\prof{\pi}\in S(\alt)^N$.
\end{defn}

\begin{defn}\label{def:topsonly}
$F_r$ is \textbf{tops-only} if $F_r(\prof{b})=F_r(\prof{c})$ for all $\prof{b},\prof{c}\in\rank^N$ with $b_i^1=c_i^1$ for every $i\in N$.
\end{defn}

A tops-only rule uses only the information at the top of each ballot, and so asks voters to reveal very little.\footnote{On the normative appeal of this property, see \citet{sekiguchi2012characterization}.} The best-known example is \emph{plurality}, which selects the alternatives ranked first by the largest number of voters.

\section{Result}

Two observations do the work. The first is that the presentation order can move a ballot, but never its top.

\begin{remark}\label{rem:top}
For every preference $\pref_i$ and order $\sigma_i$, we have $\pref_{i,\sigma_i}^1=\pref_i^1$. Moreover, $\pref_{i,\overline{\sigma_i}}$ is the singleton ranking $(\pref_i^1)$.
\end{remark}

\noindent Indeed, $\pref_i^1$ is acceptable and preferred to every other alternative, so it is retained at the step at which it is encountered; and it is preferred to every other retained alternative, so it heads the ballot. Under $\overline{\sigma_i}$ it is encountered first and becomes an anchor no later alternative can beat, so nothing else is retained.

The second observation is that every ballot profile can arise, so that anchor-proofness constrains $F_r$ on its whole domain.

\begin{remark}\label{rem:onto}
For every $b\in\rank$ there are $\pref_i$ and $\sigma_i$ with $\pref_{i,\sigma_i}=b$.
\end{remark}

\noindent To see this, let $B\subseteq\alt$ be the set ranked by $b$, let $\pref_i$ rank $B$ as $b$ does and place the remaining alternatives below, and set $t_i=|B|$, so that $\mathit{ACC}(\pref_i)=B$. Under $\underline{\sigma_i}$ each alternative is preferred to all those encountered before it, so every element of $B$ is retained and no other is, giving $\pref_{i,\underline{\sigma_i}}=b$.

\begin{teo}\label{thm:tops-only}
A rank-based voting rule is anchor-proof if and only if it is tops-only.
\end{teo}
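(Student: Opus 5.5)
The proof splits into the two directions, each resting on one of the two remarks preceding the statement.

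For the ``if'' direction, suppose $F_r$ is tops-only. Fix a preference profile $\prof{\pref}$ and two order profiles $\prof{\sigma},\prof{\pi}$. By Remark~\ref{rem:top}, $\pref_{i,\sigma_i}^1=\pref_i^1=\pref_{i,\pi_i}^1$ for every $i\in N$. The ballot profiles $\prof{\pref_{\sigma}}$ and $\prof{\pref_{\pi}}$ therefore agree on every voter's top alternative, and tops-onlyness (Definition~\ref{def:topsonly}) gives $F_r(\prof{\pref_{\sigma}})=F_r(\prof{\pref_{\pi}})$. This is immediate.

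For the ``only if'' direction, suppose $F_r$ is anchor-proof, and take $\prof{b},\prof{c}\in\rank^N$ with $b_i^1=c_i^1$ for all $i$. The plan is to connect each ballot profile to the profile of singleton tops $\prof{s}=((b_1^1),\ldots,(b_n^1))$, and to do so separately for $\prof{b}$ and for $\prof{c}$.

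For $\prof{b}$, apply the construction in Remark~\ref{rem:onto} voter by voter. This yields preferences $\pref_i$ with $\mathit{ACC}(\pref_i)$ equal to the set ranked by $b_i$, ordered as $b_i$ orders it, such that $\pref_{i,\underline{\sigma_i}}=b_i$. These preferences have $\pref_i^1=b_i^1$. By Remark~\ref{rem:top}, $\pref_{i,\overline{\sigma_i}}=(b_i^1)$. Anchor-proofness applied to the profile $\prof{\pref}$ with the order profiles $\underline{\prof{\sigma}}$ and $\overline{\prof{\sigma}}$ gives $F_r(\prof{b})=F_r(\prof{s})$. The same argument with different preferences $\prof{\pref}'$ built from $\prof{c}$ gives $F_r(\prof{c})=F_r(\prof{s})$, since the tops coincide. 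Hence $F_r(\prof{b})=F_r(\prof{c})$.

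The only point needing care is that anchor-proofness is a statement about a single fixed preference profile, so $\prof{b}$ and $\prof{c}$ cannot be compared directly: they may come from different preferences. The key step is therefore routing both comparisons through the common singleton profile $\prof{s}$, which every preference profile with the given tops generates under $\overline{\prof{\sigma}}$. Once this is done, the rest is bookkeeping.
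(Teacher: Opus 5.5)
Your proof is correct and is essentially the paper's own argument. The \emph{if} direction applies Remark~\ref{rem:top} to both order profiles. The \emph{only if} direction realises $\prof{b}$ and $\prof{c}$ via Remark~\ref{rem:onto} and routes both through the common singleton-tops profile produced by $\overline{\sigma}$. The only cosmetic difference is that you fix the realising order to be $\underline{\sigma}$ from the construction, while the paper takes an arbitrary realising order and reads off $\pref_i^1=b_i^1$ from Remark~\ref{rem:top}.
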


\begin{proof}
(if) Let $F_r$ be tops-only and fix $\prof{\pref}$ and $\prof{\sigma},\prof{\pi}$. By Remark~\ref{rem:top}, $(\pref_{i,\sigma_i})^1=\pref_i^1=(\pref_{i,\pi_i})^1$ for every $i\in N$. Tops-only then gives $F_r(\prof{\pref_{\sigma}})=F_r(\prof{\pref_{\pi}})$, so $F_r$ is anchor-proof.

(only if) Let $F_r$ be anchor-proof and take $\prof{b},\prof{c}\in\rank^N$ with $b_i^1=c_i^1$ for every $i\in N$. By Remark~\ref{rem:onto}, choose $\prof{\pref},\prof{\sigma}$ with $\prof{\pref_{\sigma}}=\prof{b}$ and $\prof{q},\prof{\tau}$ with $\prof{q_{\tau}}=\prof{c}$. By Remark~\ref{rem:top}, $\pref_i^1=b_i^1=c_i^1=q_i^1$ for every $i$, and both $\prof{\pref_{\overline{\sigma}}}$ and $\prof{q_{\overline{\tau}}}$ equal the profile of singleton rankings $((\pref_1^1),\ldots,(\pref_n^1))$. Anchor-proofness applied to $\prof{\pref}$ and to $\prof{q}$ yields
\[F_r(\prof{b})=F_r(\prof{\pref_{\overline{\sigma}}})=F_r(\prof{q_{\overline{\tau}}})=F_r(\prof{c}),\]
so $F_r$ is tops-only.
\end{proof}

\section{Discussion}

One might have expected ranked ballots to be harder to protect against the anchoring bias than approval ballots, since they ask voters for more. Theorem~\ref{thm:tops-only} shows the opposite, and identifies the reason. What makes a rule vulnerable is not the richness of the ballot but the rule's appetite for it. A ranked ballot always reveals the voter's favourite; a rule that consults nothing else is immune to the presentation order, and immune even to a planner who knows the profile in full. A rule that consults anything else is not, because everything else in the ballot can be moved by the order in which alternatives are shown.

The characterisation therefore carries a cost that should not be understated. The anchor-proof rules are exactly those that discard the comparative information ranked ballots exist to elicit; among them, plurality is the natural focal point, and the informational parsimony for which it is commonly criticised \citep{Laslier2011Plurality} is precisely what makes it robust here. Robustness to anchoring and responsiveness to voters' full preferences cannot be had together.

Several extensions suggest themselves. One could weaken anchor-proofness to invariance across a restricted set of admissible orders, which would enlarge the robust class beyond the tops-only rules. One could ask what happens when voters report bottom-truncated rather than top-truncated rankings, where the invariant statistic, if any, would sit at the other end of the ballot. And one could study rules that take the presentation order itself as an input, using it to undo the bias rather than to exploit it.

\section*{Acknowledgments}
We are grateful to Ulle Endriss, and to audiences at the 2026 meeting of the Society for Social Choice and Welfare, seminars at Maastricht University and the Universidad Nacional del Sur, and CRETE-2026. We acknowledge the CONDORCET project of PEPR Maths-VivES (ANR-24-EXMA-0001).

\bibliographystyle{apalike}
\bibliography{anchor_bib}

\end{document}